\DeclarePairedDelimiter{\abs}{\lvert}{\rvert}
\theoremstyle{definition}
\newtheorem{thm}{Theorem}
\newtheorem{example}{Example}
\newtheorem{define}{Definition}
\newcommand{\no}{\nonumber}
\begin{document}
	\title{Improving Privacy in Graphs Through Node Addition}
	\author{\IEEEauthorblockN{Nazanin Takbiri}
		\IEEEauthorblockA{Electrical and\\Computer Engineering\\
			UMass-Amherst\\
			ntakbiri@umass.edu}
		\and
		\IEEEauthorblockN{Xiaozhe Shao}
		\IEEEauthorblockA{Electrical and\\Computer Engineering\\
			UMass-Amherst\\
			xiaozheshao@engin.umass.edu}
		\and
		\IEEEauthorblockN{Lixin Gao}
		\IEEEauthorblockA{Electrical and\\Computer Engineering\\
		UMass-Amherst\\
			lgao@engin.umass.edu}
		\and
		\IEEEauthorblockN{Hossein Pishro-Nik}
		\IEEEauthorblockA{Electrical and\\Computer Engineering\\
			UMass-Amherst\\
			pishro@ecs.umass.edu\thanks{This work was supported by National Science Foundation under grants CCF--1421957, CNS-1525836, CNS--1739462, CNS-1815412, and CCF--1918187.}}
	}

	\maketitle

\begin{abstract}

The rapid growth of computer systems which generate graph data necessitates employing privacy-preserving mechanisms to protect users' identity. Since structure-based de-anonymization attacks can reveal users' identity's even when the graph is simply anonymized by employing na\"ive ID removal, recently, $k-$anonymity is proposed to secure users' privacy against the structure-based attack. Most of the work ensured graph privacy using fake edges, however, in some applications, edge addition or deletion might cause a significant change to the key property of the graph. Motivated by this fact, in this paper, we introduce a novel method which ensures privacy by adding fake nodes to the graph. 

First, we present a novel model which provides $k-$anonymity against one of the strongest attacks: seed-based attack. In this attack, the adversary knows the partial mapping between the main graph and the graph which is generated using the privacy-preserving mechanisms. We show that even if the adversary knows the mapping of all of the nodes except one, the last node can still have $k-$anonymity privacy. 

Then, we turn our attention to the privacy of the graphs generated by inter-domain routing against degree attacks in which the degree sequence of the graph is known to the adversary. To ensure the privacy of networks against this attack, we propose a novel method which tries to add fake nodes in a way that the degree of all nodes have the same expected value.

\end{abstract}

\begin{IEEEkeywords}
Graph data, Autonomous System (AS)-level graph, Inter-domain routing, Privacy-Preserving Mechanism (PPM), anonymization and de-anonymization, structural attack, Seed-based attack, $k-$anonymity, $k-$automorohism, $k-$isomorphism.
\end{IEEEkeywords}


\section{Introduction}
\label{into}

Nowadays, a huge amount of data is generated from various different computer systems which can be modeled by a graph data. Social network data~\cite{Backstrom2006GroupFI,Kossinets2008TheSO,bergami2019social,kumar2010structure}, communication data~\cite{Lakshmanan2005ToDO}, Internet peer-to-peer networks and other network topologies~\cite{Ripeanu2002MappingTG,asharov2017privacy}, mobility traced-based contact data ~\cite{Srivatsa2012DeanonymizingMT} are some examples of computer systems and services which generate graph data. In these graphs, nodes represent users/systems and edges represent relationship between users/systems~\cite{wasserman1994social}.

The necessity of sharing graph data for research purposes, data mining task, and commercial applications~\cite{Ji2017GraphDA} presents a significant privacy threat to the users/systems~\cite{narayanan2009anonymizing} --- even when the graph is simply anonymized --- since the adversary can leverage their side-information about the structural graph to infer the private information of the users/systems which generated the graph~\cite{backstrom2007wherefore, Ji2014StructuralDD, Ji2014StructureBD,Ji2015OnYS,narayanan2009anonymizing}.

The structure-based attacks have been introduced to graph data by~\cite{narayanan2009anonymizing, backstrom2007wherefore}. The structure-based attacks are aimed to de-anonymize anonymized users in terms of their uniquely distinguishable structural characteristics. There are different kinds of structure-based attacks that can be mainly categorized in the following groups:
\begin{itemize}
	\item \textbf{Degree Attacks:} Assume the adversary knows the degree sequence of the graph, thus the adversary can use the degree sequence of the graph to uniquely identify one user/system if its degree is unique~\cite{pedarsani2011privacy,yartseva2013performance}. $k-$degree anonymity is proposed by~\cite{liu2008towards} in order to protect graph against this specific attack in a way that for each node, there exist at least $k-1$ other nodes with the same degree. 
	\item \textbf{$1$-Neighborhood Attacks:} Assume the adversary knows the immediate neighbors of the target users, so they have complete information about the nodes adjacent to the target node. $k-$neighborhood anonymity is proposed by~\cite{zhou2008preserving} to protect graph against the adversary who has knowledge about the neighborhood of the target node. In this privacy mechanism, for each node, there exist at least $k-1$ other users who have same neighborhood.~\cite{jin2011preserving,wu2013d} also extend the previous work, to design an algorithm to defend against the adversary who has knowledge about the $d-$neighborhood of a target node.
	\item \textbf{Sub-graph Attacks:} The sub-graph attack in which the adversary knows a sub-graph around the target node is the general case of $1-$neighborhood attack.~\cite{hay2008resisting,hay2007anonymizing, backstrom2007wherefore} proposed a method to protect users' identity from this specific kind of attack.
	\item \textbf{Hub-Fingerprint Attacks:} Assume there exist some hubs with high degree and high betweenness centrality~\cite{newman2003structure} in the graph which have been identified in the released network. Now, assume there exists an adversary who has knowledge about distance between these sets of designated hub nodes and a target node, thus they can use this knowledge to break the privacy of the target node.	
\end{itemize}

Zou et al.~\cite{zou2009k} proposed the concept of $k-$automorphism in a way that can provide privacy against all of the above-mentioned structure-based attacks. Otherwise stated, a graph is $k-$automorphic, if for any node in the graph, there exist $k-1$ symmetric nodes based on any structural information such as their degree, their neighborhood, their distance from hubs, etc. The utility of this method characterized by using the number of faked edges added to the graph, thus, this method is specifically useful when the networks have symmetry properties~\cite{lauri2016topics,wang2009symmetry,ying2008randomizing}. However, the privacy mechanism proposed by~\cite{zou2009k} didn't address the privacy of the sensitive relationship between nodes, and in other words, the graph can suffer from path length leakage and edge leakage.~\cite{machanavajjhala2006diversity, wong2006alpha} showed the path length leakage and edge leakage exist even when a graph is $k-$anonymous and $k-$automorphic. Cheng et al.~\cite{cheng2010k} proposed a new $k-$isomorphism privacy preserving mechanism which preserves the privacy of not only nodes but also edges.~\cite{cheng2010k} convert the graph into $k$ disjoint isomorphic sub-graphs and proved each node and each edges can be identified with the probability of $\frac{1}{k}$, thus they satisfy $k-$anonymity. However, the privacy mechanism proposed by~\cite{cheng2010k} decreases the utility of the released graph since the edges between sub-graphs are deleted. In order to compensate this utility loss, Yang et al.~\cite{yang2014secure} proposed a graph anonymization method in which the anonymous graph should satisfy $AK-$secure privacy preserving mechanism to minimize utility loss.

Today, knowledge of the adversary is not just limited to the structural of the graph, but also richer side-information in the form of seeds is available to them; otherwise stated, the adversary knows the mapping between the original graph and the anonymized graph for a subset nodes~\cite{onaran2016optimal,kazemi2015growing,lyzinski2014seeded,kazemi2015can,cullina2016improved,Korula2014AnER,shirani2017seeded,yartseva2013performance}. Access of adversary to this side-information, which is difficult to control, make the previous anonymization technique more vulnerable.~\cite{onaran2016optimal,kazemi2015growing,lyzinski2014seeded,kazemi2015can,cullina2016improved,Korula2014AnER,shirani2017seeded,yartseva2013performance} assume the network graph is generated using Erd\"{o}s-R\'{e}nyi random graph model~\cite{kazemi2016network}, which is not a realistic assumption.~\cite{Ji2015OnYS,Ji2016SeedBasedDQ}, proposed the first theoretical quantification of the perfect de-anonymization of a general setting in which the graph can be generated using any random model.

The bulk of previous work ensured privacy by deleting or adding fake edges. In this paper, we turn our attention to the case that privacy is guaranteed by adding fake nodes. 
Adding fake nodes could be a promising scheme to defend the graph privacy against the adversaries with side-information. Especially, it is necessary to add fake nodes in cases where other methods, such as node deletion, edge deletion and fake edge addition, might significantly decrease the utility of the released graph. For example, to study the inter-domain routing in the Internet, the Internet topology is usually modeled as an Autonomous System-level (AS-level) graph, where each node represents an Autonomous System (AS) which is a network operated by an institution and an edge between two nodes represents that two networks are directly connected~\cite{974527,TimGriffin_SIGCOMM99,Sobrinho:2003:NRP:863955.863963,Sosnovich:2015:AIR:3089605.3089614,complexrelationship,Wang:2009:NBM:1555349.1555375}. In this scenario, the reliability property of a network to the rest of the Internet and the best path from one network to another are essential for the study of the inter-domain routing~\cite{7218436,asharov2017privacy}. Node or edge deletion might change the best path from one network to another or even make some networks unreachable from the rest of networks. Similarly, adding a fake edge between two real networks also changes the reliability property, since the fake edge leads to an additional path between two real networks. 

In this paper, we proposed a novel method called $k-$fold replication method which preserves privacy of systems/users against seed-based attacks through adding fake nodes to the graph. Then, we turn our attention to the privacy of the graphs generated by inter-domain routing against degree attacks in which the degree sequence of the graph is known to the adversary. To address this issue, we propose a novel method which tries to add fake nodes in a way that the degree of all nodes have the same expected values.

The rest of the paper is organized as follows. In Section~\ref{sec:general}, we present the general setting for privacy on graphs: system model, metrics, and definitions. Then, the conditions for achieving privacy by adding fake nodes in the case of seeded-based attack is discussed in Section~\ref{sec:fake}. In Section~\ref{sec:routers}, we discuss how to ensure privacy for the networks using node addition, and in Section~\ref{sec:conclusion}, we conclude from the results.


\section{A General Setting for Privacy on Graphs}
\label{sec:general}

In a general setting, we are given a graph $G=G(V,E)$, and without loss of generality, we write $V=\{1,2,\cdots n\}$. Given a privacy mechanism $\mathcal{M}$ which is employed to guarantee privacy, a new graph $G^p(V^p,E^p)$ has been produced. In the simplest case, we might have $G^p \simeq G$ or in more advanced settings, we could construct $G^p$ from $G$ by adding fake vertices or adding/deleting edges.

The way $G^p$ is constructed from $G$ depends on the privacy mechanism $\mathcal{M}$ which is designed for the specific context, the constraints and requirements of the problem scenario. There exists a mapping function, $\sigma:V \mapsto V^p$ which is a function that determines the mapping between vertices of $G$ and $G^p$. More specifically, for each vertex $v \in V$, $\sigma(v) \in V^p$ is the corresponding vertex in $G^p$. The adversary tries to identify users' identities by finding the mapping function $\sigma$.

Although we assume the privacy mechanism is known to adversary; the construction of $G^p$ normally involves a randomized component and this randomness is what ensures the privacy. Here, we are interested in guaranteeing a privacy level, and our goal is employing a privacy preserving method to perturb the original graph structure to protect users' privacy while preserving as much data utility as possible.

%
%
%

We first briefly review the terminology that we use in this paper. Note that we adopt the definitions of $k-$automorphism and $k-$isomorphism from~\cite{de2003large,cheng2010k,beineke2004topics,zou2009k}, respectively.
\begin{define}
	\textit{Graph Isomorphism}~\cite{de2003large}: Given two graphs $G=(V^G, E^G)$ and $Q=(V^Q,E^Q)$, graph $Q$ is isomorphic to graph $G$ if there exists a permutation function $(\Pi:V^Q \mapsto V^G)$ such that $(u,v)$ is in the set of graph edges $E^G$ iff $(\Pi(u),\Pi(v))$ is in the set of graph edges $E^Q$.
\end{define}
\begin{define}
	\textit{$k-$Isomorphism}~\cite{cheng2010k}: A graph $G(V^G,E^G)$ is $k-$isomorphic
	if graph $G$ consists of $k$ disjoint subgraphs, i.e., $G = {G_1 \cup G_2 \cup \cdots, \cup G_k}$, where $G_i$ and $G_j$ are isomorphic for $i \neq j$.
\end{define}

\begin{define}
	\textit{Graph Automorphism}~\cite{beineke2004topics} :Given graph $G=(V^G, E^G)$, it is a graph automorphism from graph $G$ to itself if there exists an automorphic function $(\Pi:V^G \mapsto V^G)$ such that $(u,v)$ is in the set of graph edges $E^G$ iff $(\Pi(u),\Pi(v))$ is in the set of graph edges $E^G$.
\end{define}
\begin{define}
	\textit{$k-$Automporphism}~\cite{zou2009k}: A graph $G(V,E)$ is $k-$automorphic if for any node $v$ in the graph, there exist $k-1$ different automorphic functions.
\end{define}

Here, we assume a strong adversary which employs seed-based attack which is defined as:
\begin{define}
	\textit{Seed-Based Attack:} In this attack, we assume the adversary knows: (1) The main Graph ($G$) (or part of it), (2) The graph which is generated by privacy mechanism $\mathcal{M}$ ($G^p$), and (3) Partial of mapping function ($\sigma$), more specifically, the adversary knows the values of $\sigma$ for a subset $V^s \subset V$. The goal of the adversary is to determine the values of $\sigma$ for some vertices in $V-V^s$.
\end{define}

Now, before our method is discussed in detail, the measures of privacy cost and degree of anonymity that we employ are proposed.
\begin{define}
	\textit{Privacy Cost:} The cost of a privacy mechanism is usually formulated as the distance measure between $G$ and $G^p$. In other words, the more we distort $G$ to make $G^p$, the more privacy cost we incur.
\end{define}

Privacy metrics could also depend on the situation. One type of privacy metric can be defined in terms of the minimum number of vertices that are needed to be revealed to the adversary ($\abs{V^s}$), so that the adversary can recover the values of $\sigma$ for some vertices in $V-V^s$. More specifically, we can have the following definition.

\begin{define}\textit{Privacy Tolerance of Node $v$:}
	A privacy mechanism has a \emph{privacy tolerance} $\tau_v$ for a vertex $v \in V$, if the adversary is unable to recover $\sigma(v)$ unless $\abs{V^s}>\tau_v$. The largest value of $\tau_v$ that satisfies this property, is said to be the maximum tolerance of the mechanism for vertex $v$ and we write
	\[ \mathcal{T}_{v}(\mathcal{M})=\tau_v.\]
\end{define}
The value $\mathcal{T}_v(\mathcal{M})$ is specific to a vertex and depends on the structure of $G$, so we can provide the following measure for privacy of the mechanism that does not depend on the graph $G$.
\begin{define}\textit{Privacy Tolerance:}
 A privacy mechanism has a \emph{privacy tolerance} $\tau$ if for all graphs $G$ with $|V|=n$, and all the vertices $v \in V^s$, we have $\mathcal{T}_{v}(\mathcal{M}) \geq \tau$. The largest value of $\tau$ that satisfies this property, is said to be the maximum privacy tolerance of the mechanism and we write
 \[ \mathcal{T}(\mathcal{M})=\tau.\]
\end{define}
Clearly, we have $\mathcal{T}(\mathcal{M}) \leq n-1$. Maximum tolerance gives some measure of privacy, but it does not provide all the needed information. More specifically, it does not give us a measure of the uncertainty of the adversary when $\abs{V^s} < \mathcal{T}_n(\mathcal{M})$.

Now, we define \emph{privacy function} that provides a much more complete picture about the privacy level of a mechanism. As we will see, the maximum tolerance defined above can be easily extracted from the privacy function. Intuitively, the privacy function, $h_n(\Lambda)$, gives us the guaranteed uncertainty about $\sigma(v)$, when the adversary has labels of $\Lambda$ vertices (i.e., $\abs{V^s}=\Lambda$).

We now provide the formal definition of privacy function for a privacy mechanism $\mathcal{M}$. Entropy and mutual information usually provide an effective tools for defining privacy measures~\cite{Nazanin_IT, tifs2016}. If $\abs{V^s}=\Lambda$, we write
\[ V^s=\{u_1,u_2, \cdots, u_\Lambda\}. \]

\begin{define}
	\label{def9}
 \textit{Privacy Function:} For a privacy mechanism $\mathcal{M}$, privacy function which is denoted as $h_n: \{0,1,2, \cdots, n\} \mapsto \mathbb{R}^{+}$ is defined as follows:
 \begin{align}
 \no h_n(\Lambda)&=\min \bigg{\{}H \big{(}\sigma(v)\big| \sigma(u_1),\sigma(u_2), \cdots, \sigma(u_\Lambda) \big{)} :\\
\no &\hspace{1.3 in}|V|=n, |V^s|=\Lambda, v \in V-V^s \bigg{\}},\ \
 \end{align}
 where $H(\cdot | \cdot)$ denotes the conditional entropy. Note that~\cite{diaz2002towards,serjantov2002towards,nilizadeh2014community,ji2018quantifying,ji2016evaluating} also employed entropy to define degree of anonymity achieved by the users of a system towards particular attackers.
\end{define}Note that the minimum is taken over all graphs $G$ with $n$ vertices, all $v \in V$, and all $V^s \subset V$. With this definition, it easy to see
 \[ \mathcal{T}_n(\mathcal{M})= \min \{\tau: h_n(\tau)=0 \}.\]

\section{Achieving Privacy Against Seed-Based Attacks}
\label{sec:fake}

When the Internet is modeled as an AS-level graph, to preserve the utility of the AS-level graph, the anonymization scheme should maintain the key properties, such as the reachability and reliability between networks.
In the studies of the inter-domain routing, it could be essential to figure out the reliability and the best path from one network to another~\cite{7218436,asharov2017privacy}. For example, to derive the global routing table, the best route of each network should be derived based on the AS-level graph~\cite{asharov2017privacy}.

In an AS-level graph, adding fake nodes into the AS-level graph introduces additional fake networks. After that, fake edges can be added between two fake nodes or between a fake node and a real node. These fake edges represent fake connections between two networks. Note, we do not add fake
edges between two origin nodes, since it decreases the utility of the graph.

Adding fake nodes can preserve the utility of an AS-level graph, since the key properties of the graph can be maintained. On one hand, the additional fake nodes do not remove the original paths between two networks in the graph. Therefore, if a path from a real network to another real network exists in the original graph, the path always exists in the generated graph. On the other hand, even if the additional fake nodes and edges might lead to additional paths from a network to another, the the export policy of fake nodes can not altered to guarantee that these additional paths between two real networks are invalid paths in terms of inter-domain routing. That is, a fake network will not announce a route that goes through real networks to its neighboring nodes that represent real networks. Therefore, a path with any fake edge will not be a valid route from one real network to another real network.

In this section, we propose approaches to make a graph $k-$anonymized through adding fake nodes. Here, the adversary has side-information not only about the structural information of the graph, but also in the form of seeds.
Now, since we focus on privacy mechanisms that add fake nodes to preserve privacy, we use the number of fake nodes to measure the cost of a privacy mechanism as follows.
\begin{define}
\textit{Privacy Cost:} Given an original graph $G$ and the generated graph by privacy mechanism $G^p$, the privacy cost is defined as the number of fake nodes added to the original graph, in other words,
$$Cost\left(G,G^p\right) = |V^p|-|V|$$.
\end{define}

\subsection{The Na\"ive Approach}

In this method, we generate exactly $k-1$ copies of the original graph to achieve a new graph which satisfies $k-$automorphism/$k-$isomorphism.
\begin{figure}[h]
	\centering
	\includegraphics[width = \linewidth]{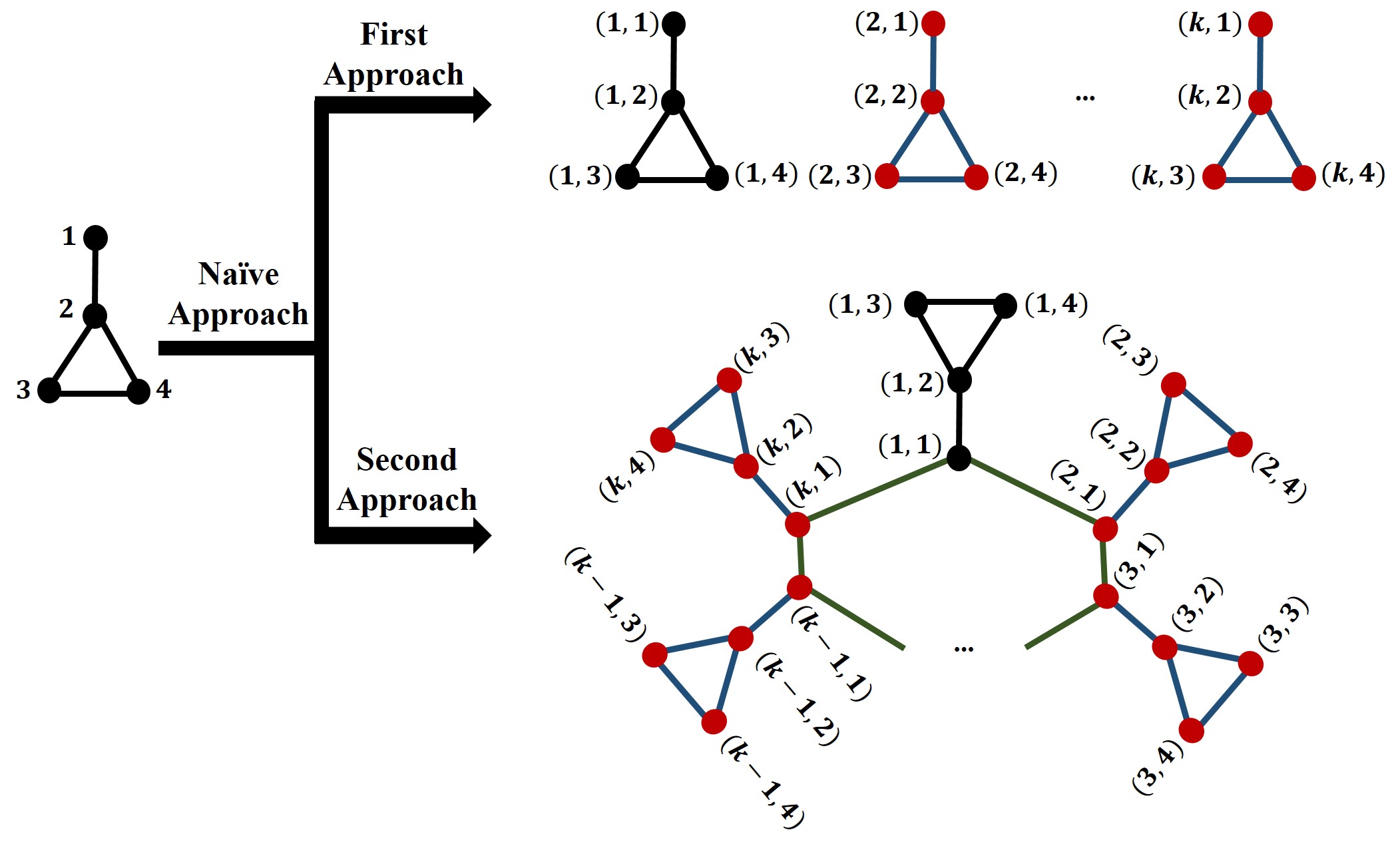}
	\caption{An Example which demonstrates how Na\"ive approach is employed to satisfy $k-$anonymity. First approach satisfies $k-$isomorphism by generation $k-1$ copies of the original graph. Second approach satisfies $k-$automorphism by connecting all of the $k-1$ copied graphs. }
	\label{fig:naive}
\end{figure}

As it is shown in Figure \ref{fig:naive}, the set of nodes of the generated graph can be defined as:
\begin{align}
\no V^p=\left\{(i,j): i \in \{1,2,\cdots, k\}, j \in \{1,2,\cdots, n\}\right\},\ \
\end{align}
thus, the privacy cost can be calculated as:
$$Cost (G, G^p)=(k-1)n.$$
Note that the privacy tolerance of this graph which is generated using Na\"ive approach is one, in other words, if only one of the nodes is known to the adversary, the adversary can recover the whole graph and break privacy of users.
 \[ \mathcal{T}_n(\mathcal{M})=1.\]
Now, by using the fact that the privacy tolerance of this generated graph is equal to one, we can conclude the privacy function of the Na\"ive approach can be calculated as:
\[h_n(\Lambda)=\begin{cases}
\log_{2}k, & \textrm{for } \Lambda=0.\\
0, & \textrm{for } \Lambda \geq 1.
\end{cases}\]
As a result, the "$k-$anonymity" type approaches to privacy are not usually sufficient.

\subsection{The Replication Method}


Motivated by the above discussions, here we introduce a technique to guarantee privacy against seed-based attacks. We call it the replication method, as it can be thought of as replicating the vertices of the original graph in a special way. The basic idea is as follows. Start from any vertex in the graph $v_1 \in G$ and add a new vertex to the graph which is connected to all the neighbors of $v_1$. Call the new graph $G_1$. Now identify another vertex in $v_2 \in G$ and add a new vertex in $G_1$ that is connected to all neighbors of $v_2$ in $G_1$. This will give you $G_2$. Repeat this process until you exhaust all vertices of $G$. At the end you will obtain $G_{n}$ which will be our graph $G^p$. This is a two-fold replication method. You can simply extend this to a $k$-fold replication by repeating the whole process $k-1$ times. Below we formally introduce the technique and show that it provides a high guarantee against seed-based attacks.

The set of nodes of the graph which is generated by $k-$fold replication can be defined as:
\begin{align}
\no V^p&=\big{\{}(i,j): i \in \{1,2,\cdots, k\}, j \in \{1,2,\cdots, n\}, \\
\no &\hspace{1.2 in}\left((i,j),(u,v)\right)\in V^p \text{ iff } (j,v) \in V\big{\}} ,\ \
\end{align}
thus, the privacy cost can be calculated as:
\[Cost(G,G^p)=(k-1)n,\]
which is the same as the Na\"ive method. The number of fake edges needed in the $k-$replicated method can be also calculated as:
$$\Delta |E|=|E^p|-|E|=(k^2-1) |E|.$$

Note that the privacy tolerance of the graph $\mathcal{T}_n(\mathcal{M})$ can be calculated as:
\[ \mathcal{T}_n(\mathcal{M})=n-1, \]
in other words, if the adversary knows the mapping function $(\sigma)$ for $n-1$ nodes--which is the maximum possible value for the privacy tolerance-- they still identify the last node with the probability of $\frac{1}{k}$, so this method can obtain the maximum possible value for the privacy tolerance. 

\begin{thm}
For $k-$fold replication method, the privacy function is bigger than or equal to $\log_2k$ for the case $\Lambda \in\{0, 1, \cdots, n-1\}$. In other words,
$$h_n(\Lambda) = \log_{2} k \text{ for all } \Lambda=0,1,2, \cdots, n-1.$$
\end{thm}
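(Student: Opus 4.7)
The plan is to exploit the automorphism structure of the replicated graph $G^p$. The central observation is that, by construction, for each original vertex $j \in V$ the $k$ copies $(1,j),(2,j),\ldots,(k,j)$ in $G^p$ share the identical neighborhood $\{(u,w) : w \in N_G(j),\, u \in \{1,\ldots,k\}\}$, which does not depend on the copy index $i$. Consequently, for any $\pi_j \in S_k$, the map that permutes $\{(1,j),\ldots,(k,j)\}$ according to $\pi_j$ and fixes every other vertex of $G^p$ is a graph automorphism. Taking independent copy-permutations for every $j$, the product group $(S_k)^n$ embeds into $\mathrm{Aut}(G^p)$.

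I would then lift this symmetry to an entropy lower bound. Fix any $\Lambda \in \{0,1,\ldots,n-1\}$, a graph $G$ on $n$ vertices, a seed set $V^s=\{u_1,\ldots,u_\Lambda\}$, and a non-seed vertex $v$. Let $H \leq \mathrm{Aut}(G^p)$ be the pointwise stabilizer of $\{\sigma(u_1),\ldots,\sigma(u_\Lambda)\}$. Under the natural uniform prior on $\sigma$ (equivalently, the adversary sees only the unlabeled isomorphism type of $G^p$ together with the seed images), the posterior on $\sigma(v)$ is uniform on the orbit $H \cdot \sigma(v)$, because for any $\pi \in H$ the composition $\pi \circ \sigma$ is a bijection consistent with the same observation. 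The subgroup of $H$ that permutes only the $k$ copies of $v$ and fixes every other vertex acts transitively on $\{(1,v),\ldots,(k,v)\}$, so $\lvert H\cdot \sigma(v)\rvert \geq k$ and the conditional entropy is at least $\log_2 k$, giving $h_n(\Lambda) \geq \log_2 k$.

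For the matching upper bound, I would exhibit an instance achieving equality. Take $G$ to be any rigid graph on $n$ vertices, i.e., $\mathrm{Aut}(G)$ trivial; then $\mathrm{Aut}(G^p)$ is exactly $(S_k)^n$, so for every seed set $V^s$ with $|V^s|\leq n-1$ and every non-seed $v$, the orbit of $\sigma(v)$ under the seed stabilizer is precisely $\{(1,v),\ldots,(k,v)\}$, producing conditional entropy exactly $\log_2 k$. Taking the minimum over graphs in the definition of $h_n$ then pins down $h_n(\Lambda)=\log_2 k$. The main obstacle I anticipate is the careful verification that $\mathrm{Aut}(G^p)=(S_k)^n$ when $G$ is rigid, i.e., ruling out ``diagonal'' automorphisms that mix copies of distinct original vertices; this is handled by the standard twin-vertex argument, observing that any automorphism of $G^p$ must preserve the equivalence classes of vertices with identical closed neighborhoods, and that these classes are exactly the $k$-tuples $\{(1,j),\ldots,(k,j)\}$ for each $j$, which then forces the automorphism to descend to an automorphism of $G$ itself, hence the identity.
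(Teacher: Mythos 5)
Your proof is correct and rests on the same underlying insight as the paper's --- the $k$ copies $(1,j),\ldots,(k,j)$ of each original vertex have identical neighborhoods in $G^p$, so the copy-swapping symmetry survives any amount of seed knowledge about other vertices --- but the two arguments are organized differently. For the lower bound, the paper computes only the extreme case $h_n(n-1)=\log_2 k$ by appealing to the symmetry of the $k$ indistinguishable copies, and then deduces $h_n(\Lambda)\geq h_n(n-1)$ for all $\Lambda\leq n-1$ from the fact that conditioning reduces entropy; you instead argue directly for every $\Lambda$ via the embedding $(S_k)^n\hookrightarrow\mathrm{Aut}(G^p)$ and the observation that the seed-stabilizer still acts transitively on the $k$ copies of the target vertex, forcing the posterior to be uniform on an orbit of size at least $k$. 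Your version is more explicit about why the posterior is uniform (the paper's ``by symmetry'' is doing that work implicitly) and avoids the monotonicity step, at the cost of having to be careful about the prior on $\sigma$. For the upper bound the witnesses differ: the paper uses the star $K_{1,n-1}$, where the center is pinned down to the $k$ vertices of degree $k(n-1)$ by a degree count alone; you use a rigid graph and the stronger claim $\mathrm{Aut}(G^p)=(S_k)^n$, which requires the twin-class argument you sketch. One concrete caveat with your choice: nontrivial rigid graphs do not exist for $2\leq n\leq 5$ (every graph on at most five vertices with more than one vertex has a nontrivial automorphism), so your witness is unavailable for small $n$, whereas the star works for all $n\geq 2$; you would need either to restrict to $n\geq 6$ or to weaken the requirement to a graph with no two vertices sharing a neighborhood, which is all the twin-class argument actually needs. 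With that repair both routes are sound, and yours has the side benefit of showing that equality $h=\log_2 k$ holds for essentially every input graph, not just the extremal one.
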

\begin{proof}
     In the first step, we prove that $ h_n(\Lambda) \geq \log_2 k$. To do so, let's assume the adversary knows the mapping function for $n-1$ nodes, and there is only one unknown node. Since the replication method creates a $k$ vertices with the same neighbors, by symmetry, the privacy function can be calculated as
	\begin{align}
	\no h_n(n-1)&=-\sum\limits_{i=1}^{k}p_i\log_2 p_i\\
	\no &=-\sum\limits_{i=1}^{k}\frac{1}{k}log_2 \frac{1}{k}\\
	&=\log_2{k}.\ \
	\label{h1}
	\end{align}
Now, given the fact that conditioning reduces entropy, for all $\Lambda \leq n-1$, we have
\begin{align}
\no &H\left(\sigma(v)|\sigma(u_1), \sigma(u_2),\cdots, \sigma(u_{\Lambda})\right) \geq\\
\no &\hspace{1.5 in}H\left(\sigma(v)|\sigma(u_1), \sigma(u_2),\cdots, \sigma(u_n)\right),\ \
\end{align}
and as a result, for all $\Lambda \leq n-1$,
\begin{align}
\no h_n(\Lambda) &\geq h_n(n-1)\\
&=\log_2k.\ \
	\label{h2}
\end{align}
Now, from (\ref{h1}) and (\ref{h2}), we can conclude the privacy function for the $k-$replication method satisfies
$$h_n(\Lambda) \geq \log_{2} k \text{ for all } \Lambda=0,1,2, \cdots, n-1.$$

In the second step, to show that
$$h_n(\Lambda) \leq \log_{2} k \text{ for all } \Lambda=0,1,2, \cdots, n-1,$$
it suffices to provide examples of scenarios (for all $n$ and $\Lambda=0,1,2, \cdots, n-1$) where
\[H \big{(}\sigma(v)\big| \sigma(u_1),\sigma(u_2), \cdots, \sigma(u_\Lambda) \big{)}\leq \log_{2} k. \]

Consider the graph $G (V,E)$, which is shown in Figure \ref{fig:star}. In this graph, there exist $n-1$ node with degree of one, and one node with degree of $n-1$. Thus, the degree sequence of the original graph is equal to
$$\textbf{d}^G=[\underbrace{1, 1, \cdots,1,}_\text{n-1} n-1],$$

Note that the $k-$fold replication method, increases degree of each node by a factor of $k$, thus, the degree sequence of graph after $k-$fold replication is equal to:
$$\textbf{d}^{G^P}=[\underbrace{k, k, \cdots,k,}_\text{k(n-1)} \underbrace{k(n-1), k(n-1), \cdots,k(n-1)}_\text{k}].$$
This means that in the $k-$fold replication method, the uncertainty of the adversary about the node which has degree of $n-1$ is always less than or equal to $\log_{2}  k$. Therefore,
$$h_n(\Lambda) \leq \log_{2} k \text{ for all } \Lambda=0,1,2, \cdots, n-1.$$

	\begin{figure}[h]
	\centering
	\includegraphics[width = 0.4\linewidth]{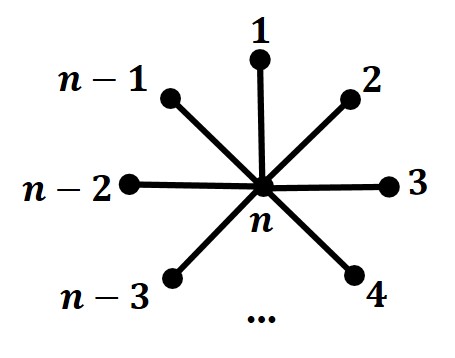}
	\caption{A graph with $n-1$ node with degree of one, and one node with degree of $n-1$.}
	\label{fig:star}
\end{figure}


\end{proof}

The idea behind $k-$fold replication method is explained by the following example:
	\begin{example}
Figure \ref{fig:replication} shows the replication method for the case $k=2$; namely, two-fold replication. In this example, number of nodes ($n$) is equal to $3$, thus the privacy cost is calculated as
$$Cost(G,G^p)=3,$$
 and the privacy tolerance of the graph is equal to
 $$\mathcal{T}_n(\mathcal{M})=2.$$
Thus, the privacy function of this two-replicated method is equal
$h_n(\Lambda) = 1$ for all $\Lambda=0,1,2, \cdots, n-1.$

	\begin{figure}[h]
		\centering
		\includegraphics[width = 0.8\linewidth]{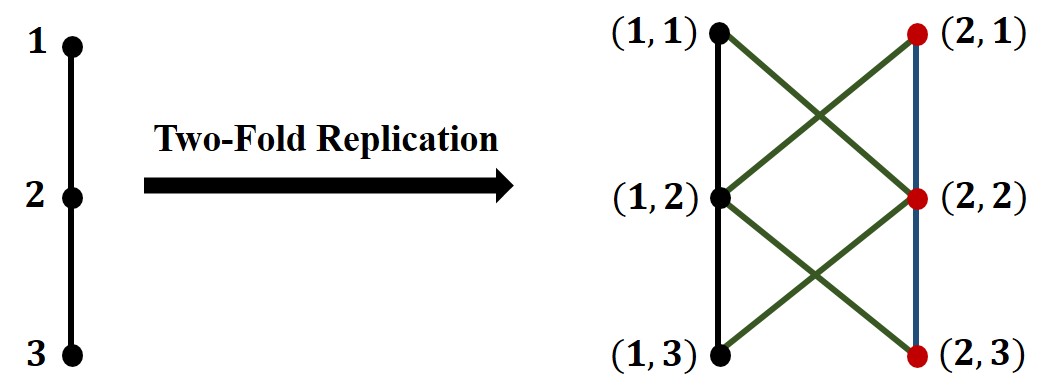}
		\caption{Two-fold Replication Method. Black nodes represent real nodes and red nodes represent added fake nodes. Black edges represent the real edges, green edges represent the fake edges which connect a fake node and a real node, and blue edges represent the fake edges which connect two fake nodes. }
		\label{fig:replication}
	\end{figure}

	\end{example}

\section{Privacy of Networks Through Node Addition}
\label{sec:routers}

In this section, we want to provide privacy regarding the \textit{Degree Attack}. Degree attack is one type of the structure-based attacks, in which, the adversary knows degree sequence of the graph, in other words, they know the degree of all nodes in the graph~\cite{liu2008towards,hay2008resisting}. 

In the scenario of the AS-level graph, the degree of each network can be inferred from publicly available datasets, such as Border Gateway Protocol (BGP) routing information provided by Route Views project~\cite{routeview} and Routing Information Service (RIS) provided by R\'eseaux IP Europ\'eens Network Coordination Center (RIPE NCC)~\cite{ripe}. 
The large networks at the core of the Internet, such as Tier-1 Internet Service Providers (ISPs), have high degree. Through their special degrees, the large networks can be easily mapped to the nodes in the AS-level graph. Then these identified networks can benefit the follow-up de-anonymization greatly, since these identified networks will provide a lot of structural information. When these networks are identified, the privacy information (routing policies) attached to the nodes will be disclosed.

Although the privacy schemes in Section~\ref{sec:fake} can provide privacy regarding various structure-based attacks, it incurs high privacy cost. Namely, $k-1$ replications for each real node are added to the original graph. In this section, we propose a more efficient privacy mechanism in terms of privacy cost against degree attack.

\subsection{A General Setting for Privacy Against Degree Attack}
The original graph such as power law graph is asymmetric in terms of degree distribution~\cite{hay2009accurate}, which reveals a lot of information to the adversary who has knowledge about degree distribution of the graph data.
Assume $G(V, E)$ denotes a graph data with set of nodes $V$ ($|V|=n$) , and set of edges $E$. Without loss of generality, we assume $v \in V =\{1, 2, \cdots, n\}$. Our main goal is protecting identity of all of the nodes of this graph from a strong adversary who has full knowledge about the degree sequence of this graph. Now, assume the adversary knows the $1 \times n$ vector containing the degree of node $v$, $$\textbf{d}=[d_1, d_2, \cdots, d_n],$$where $d_v$ denotes degree of node $v$. In order to preserve the privacy of nodes, a new random graph called $G^P$ is generate in a way that all vertices have the same expected degree.
In order to have the same expected values for all the vertices, we add $m$ fake nodes to the original graph $(G)$. $U=\{1,2, \cdots, m\}$ denotes the set of fake nodes; thus, $u \in \{1,2, \cdots, m\}$. Now, new graph $G^p=(V^p,E^p)$ is constructed using probabilistic method to introduce more uncertainty to the model and as a result, confuse the adversary more. In this method, fake edges which connects two fake nodes or one fake node with one real node randomly; to be more specific,
\begin{itemize}
\item Each edge between a real node and a fake node is included in the graph with probability $p_{vu}$ independent from every other edge, where $v \in \{1,2, \cdots n\}$ and $u \in\{1,2, \cdots,m\}$. There exist $n \times m$ different values for $p_{vu}$'s.

\item Each edge between two fake nodes is included in the graph with probability $q_{uw}$ independent from every other edge, where $u,w \in\{1,2, \cdots,m\}$. There exist ${m \choose 2}$ different values for $q_{uw}$'s.
\end{itemize}

Now, as shown in Figure~\ref{fig:degree1}, for all real nodes $v \in \{1, 2, \cdots, n\}$, the expected value of degree of each real node after this operation can be calculated as:
 \begin{align}
\mathbb{E}[D_v]=d_v+\sum\limits_{u=1}^{m}p_{vu}.\ \
\label{deg_1}
 \end{align}
 	\begin{figure}[h]
 	\centering
 	\includegraphics[width = 1\linewidth]{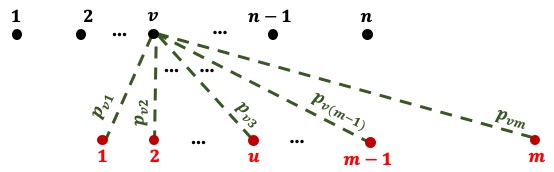}
 	\caption{The real node $v$ and its degree after the anonymization operation is employed to generate the random graph $G^P$. In this figure, for simplicity, the real edges of the graph is not shown.}
 	\label{fig:degree1}
 \end{figure}

 Also, as shown in Figure~\ref{fig:degree2}, for all fake nodes $u \in \{1, 2, \cdots, m\}$, the expected value of degree of each fake node after this operation can be calculated as:
 \begin{align}
\mathbb{E}[D'_u]=\sum\limits_{v=1}^{n}p_{vu}+\sum\limits_{\substack{w=1 \\ w\neq u}}^{m}q_{uw}.\ \
\label{deg_2}
 \end{align}

\begin{figure}[h]
\centering
\includegraphics[width = 1\linewidth]{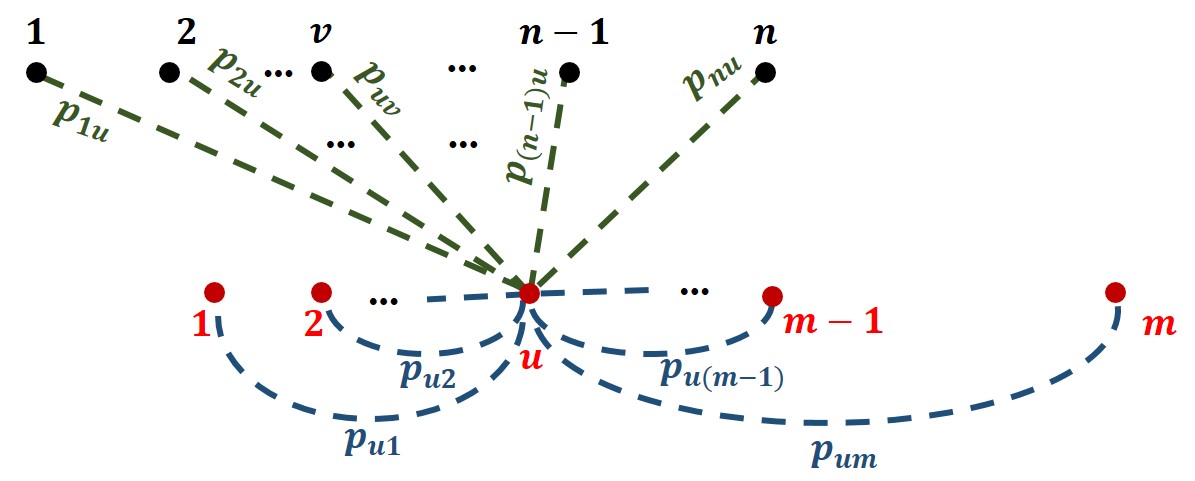}
\caption{The fake node $u$ and its degree after the anonymization operation is employed to generate the random graph $G^P$.}
\label{fig:degree2}
\end{figure}

 Our goal is adjusting the value of $p_{vu}$'s and $q_{uw}$'s in a way that identity of users against degree attack will be protected.

\begin{thm}
	Consider a general graph $G(V,E)$. If all the followings hold
	\begin{enumerate}
		\item $m=n-\frac{2|E|}{a}$, where $a$ is a constant number which is greater than the maximum degree of original graph.
		\item $p_{vu}=\frac{a-d_v}{m}$, for all $v \in \{1,2, \cdots, n\}$ and $u \in \{1,2, \cdots, m\}$.
		\item $q_{vw}=0$, for all $u \in \{1,2, \cdots, n\}$ and $w \in \{1,2, \cdots, m\}$.
	\end{enumerate}
	 then, the expected values of all the real and fake nodes have the same value which is equal to $a$.
\end{thm}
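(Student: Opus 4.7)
The plan is to verify the claim by computing the expected degree separately for real nodes and for fake nodes, using the two formulas (\ref{deg_1}) and (\ref{deg_2}) already derived in the excerpt, and then checking that the stated choice of $m$ is precisely the one that makes the fake-node expectation equal $a$.

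First I would handle the real nodes, which is the easy direction. Substituting $p_{vu} = (a - d_v)/m$ into (\ref{deg_1}) gives
\[
\mathbb{E}[D_v] \;=\; d_v + \sum_{u=1}^{m} \frac{a - d_v}{m} \;=\; d_v + (a - d_v) \;=\; a,
\]
for every $v \in \{1,2,\dots,n\}$. Note this holds regardless of the specific value of $m$, and the assumption that $a$ exceeds the maximum degree is exactly what is needed to ensure each $p_{vu}$ is a valid probability in $[0,1]$.

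Next I would handle the fake nodes, which is where the choice of $m$ enters. Substituting $q_{uw} = 0$ and $p_{vu} = (a - d_v)/m$ into (\ref{deg_2}) yields
\[
\mathbb{E}[D'_u] \;=\; \sum_{v=1}^{n} \frac{a - d_v}{m} + 0 \;=\; \frac{n a - \sum_{v=1}^{n} d_v}{m}.
\]
By the handshaking lemma, $\sum_{v=1}^{n} d_v = 2|E|$, so $\mathbb{E}[D'_u] = (n a - 2|E|)/m$. Setting this equal to $a$ and solving gives $m = n - 2|E|/a$, which is exactly the value prescribed in condition~(1). Thus every fake node also has expected degree $a$.

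There is no real obstacle here; the content of the theorem is essentially the observation that $m = n - 2|E|/a$ is the unique value which balances the average mass distributed from the real side against the fake-node target $a$. The only subtlety worth flagging is that the result is a statement purely about expectations, and that one implicitly needs $a > \max_v d_v$ (so that $p_{vu} \geq 0$) and $m$ to be a positive integer (which constrains the admissible values of $a$); I would mention these as side remarks rather than as part of the derivation.
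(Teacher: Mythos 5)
Your proposal is correct and follows essentially the same route as the paper's own proof: substitute $p_{vu}=(a-d_v)/m$ into (\ref{deg_1}) to get $\mathbb{E}[D_v]=a$ for real nodes, then use $\sum_v d_v = 2|E|$ and the prescribed $m$ in (\ref{deg_2}) to get $\mathbb{E}[D'_u]=a$ for fake nodes. Your side remarks on $a$ exceeding the maximum degree (so $p_{vu}\in[0,1]$) and on $m$ needing to be a positive integer are worthwhile additions that the paper leaves implicit.
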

\begin{proof}	
	 After adding $m$ fake nodes which each of them is connected to real nodes with probability of $p_{vu}=\frac{a-d_v}{m}$, for all $v \in \{1, 2, \cdots, n\}$, the expected value of of degree of each real node can be calculated by using (\ref{deg_1}):
	\begin{align}
	\no \mathbb{E}[D_v]&=d_v+m\left(\frac{a-d_v}{m}\right)\\
	&=a,\ \
	\label{real_exp}
	\end{align}
	Also, for all $u \in \{1, 2, \cdots, m \}$, by using (\ref{deg_2}), we have
		\begin{align}
	\no \mathbb{E}[D'_u]&=\sum\limits_{v=1}^{n}p_{vu}+\sum\limits_{\substack{w=1 \\ w\neq u}}^{m}q_{uw}\\
	\no &=\sum\limits_{v=1}^{n}\frac{a-d_v}{m}\\
	\no &=\frac{na-2|E|}{\frac{na-2|E|}{a}}\\
	&=a,\ \
	\label{fake_exp}
	\end{align}
since each fake node is connected to a real node with probability of $p_{vu}=\frac{a-d_v}{m}$, and two fake nodes are connected with probability of $q_{uw}=0$.

Now, from (\ref{real_exp}) and (\ref{fake_exp}), we can conclude after this operation, the expected values of all the real and fake nodes have the same value which is equal to $a$. Now, since all the nodes have the same expected values, the adversary gets more confused.
\end{proof}

\section{Conclusion}
\label{sec:conclusion}

The wide presence of graph data which are generated by computer systems requires graph-based privacy-preserving mechanisms. Most of the proposed privacy-preserving mechanisms ensure privacy by deleting/ adding edges. However, adding or deleting edges between two real nodes might significantly decrease the utility of the released graph data, thus, in this paper, we presented graph-based privacy preserving mechanisms which ensure privacy by adding fake nodes to the original graph. In the first part of the paper, we proposed a novel mechanism called $k-$replication method to protect the identity of users against one of the strongest attacks called seed-based attack. In the second part of the paper, we improve privacy of inter-domain routing against degree attack by adding fake nodes and edges in a way that the degree of all nodes have the same expected values. 

\appendices

\bibliographystyle{IEEEtran}
\bibliography{REF}
%
%
%
%

\end{document}